\newtheorem{theory}{Theorem}
\newtheorem{lemma}{Lemma}
\begin{document}

\title{Distributed CSMA/CA MAC Protocol for \\ RIS-Assisted Networks}
\vspace{-1cm}
\author{ 
\IEEEauthorblockN{
Zhou~Zhang\IEEEauthorrefmark{1},
Saman~Atapattu\IEEEauthorrefmark{2},
Yizhu~Wang\IEEEauthorrefmark{1}, and Marco~Di~Renzo\IEEEauthorrefmark{3}}
\vspace{-0.5cm}
\\
 \IEEEauthorblockA{
 \IEEEauthorrefmark{1}Tianjin Artificial Intelligence Innovation Center, China. \\
 \IEEEauthorrefmark{2}School of Engineering, RMIT University, Melbourne, Victoria, Australia. \\
 \IEEEauthorrefmark{3}Université Paris-Saclay, 3 Rue Joliot Curie, 91190 Gif-sur-Yvette, France. \\
\IEEEauthorblockA{Email:
\IEEEauthorrefmark{1}\{zt.sy1986, wangyizhuj\}@163.com;\,
\IEEEauthorrefmark{2}saman.atapattu@rmit.edu.au;\,
\IEEEauthorrefmark{3}marco.di-renzo@universite-paris-saclay.fr
}
}
}

\maketitle

\begin{abstract}
This paper focuses on achieving optimal multi-user channel access in distributed networks using a reconfigurable intelligent surface (RIS). The network includes wireless channels with direct links between users and RIS links connecting users to the RIS. To maximize average system throughput, an optimal channel access strategy is proposed, considering the trade-off between exploiting spatial diversity gain with RIS assistance and the overhead of channel probing. The paper proposes an optimal distributed Carrier Sense Multiple Access with Collision Avoidance (CSMA/CA) strategy with opportunistic RIS assistance, based on statistics theory of optimal sequential observation planned decision. Each source-destination pair makes decisions regarding the use of direct links and/or probing source-RIS-destination links. Channel access occurs in a distributed manner after successful channel contention. The optimality of the strategy is rigorously derived using multiple-level pure thresholds. A distributed algorithm, which achieves significantly lower online complexity at $O(1)$, is developed to implement the proposed strategy. Numerical simulations verify the theoretical results and demonstrate the superior performance compared to existing approaches.

\end{abstract}

\begin{IEEEkeywords}
Carrier Sense Multiple Access (CSMA), Collision Avoidance (CA),
Multi-user communications,
Reconfigurable intelligent surface (RIS),
Sequential analysis,
Statistics theory.
\end{IEEEkeywords}
\vspace{-0.2cm}
\section{Introduction} \label{s:intro}
The emerging technology of reconfigurable intelligent surfaces (RIS) has the potential to boost wireless network throughput and spectral efficiency \cite{Zdogan2020}. RISs offer cost-effective and energy-efficient advantages over active relays for facilitating efficient transmission among multiple users. However, integrating RISs into multiple user and RIS medium access control (MAC) layer designs faces challenges due to network decentralization and changing channel conditions. This paper focuses on designing reliable and efficient strategies that incorporate RISs to enhance transmission robustness and spectrum utilization efficiency, especially for high-rate data exchange in next-generation smart communications. Extensive research has explored RIS-assisted wireless channel access, primarily focusing on single-RIS system's physical layer design to adjust electromagnetic wave phases for improved communication~\cite{Shuowen2020,Atapattu2020tcom,Fang2022tcom}. RIS integration extended to MIMO systems in \cite{Shuowen2020,Dharmawansa2021wcoml}, demonstrating advantages over traditional relaying methods in \cite{Boulogeorgos2020}. RIS has also been integrated into wireless network MAC design, especially in multi-antenna base station cellular networks as seen in \cite{Wu2019,Gao2021}. The impact of limited RIS phase shifts and size on system performance was analyzed in \cite{HZhang2020}. Further, the transmit powers and the phase shift at each element of the RIS were optimized to maximize the sum-secrecy rate in~\cite{Wijewardena2021coml}.

Current research in this field often assumes the presence of global channel state information (CSI), overlooking the time needed for CSI acquisition. The emerging area of distributed RIS-assisted MAC faces three key challenges: 
(i) Joint scheme for channel contention, CSI acquisition, and RIS-assisted channel access for multiple users.
(ii) Tradeoff between RIS channel acquisition overhead, effective data transmission, channel contention time, and diversity.
(iii) Requirement for low-complexity distributed network operations enabling feasible strategies based on local observations for each user.
{\it This work pioneers the solution to the distributed MAC problem with the assistance of RIS and introduces a robust statistical optimization framework}. 
It significantly contributes to the existing body of research by addressing the following gaps:
\begin{enumerate}
	\item Distributed Network Design: Our novel approach integrates multi-user opportunistic RIS CSI acquisition and assisted access, utilizing optimal sequential observation planned decision theory. This robust framework addresses the MAC problem effectively.  
 \item Optimal RIS-Assisted MAC Strategy: We have developed an optimal MAC strategy that maximizes average throughput by jointly optimizing direct link transmission, opportunistic RIS probing, and RIS-assisted transmission. This statistically proven strategy is practical, supported by closed-form thresholds and a distributed channel access algorithm.
\item
Superior Performance: Comprehensive evaluations and comparisons confirm the superiority of our proposed strategy. It significantly enhances network performance, underlining its potential for improving wireless communication systems. 
\end{enumerate} 

 

{\it Notations}: Throughout this paper, numbers, vectors, and matrices are denoted by lower-case, bold-face lower-case, and bold-face upper-case letters, respectively. 
The superscript $(\cdot)^T$ stands for the transpose. $\mathbf{x}\astrosun\mathbf{y}$ denotes the Hadamard product operator.
$\mathbb{E}[\cdot]$ represents the expectation operator. 
Moreover, $\mathcal{CN}(\cdot,\cdot)$ is a circularly symmetric complex Gaussian distribution. For a set $S$, $|S|$ denotes set cardinality.
\section{System Model}\label{sub:system_model}
\subsection{Network Model}
\begin{figure}[h]
	\begin{center}
 \vspace{-4mm}
		\includegraphics[scale=.4]{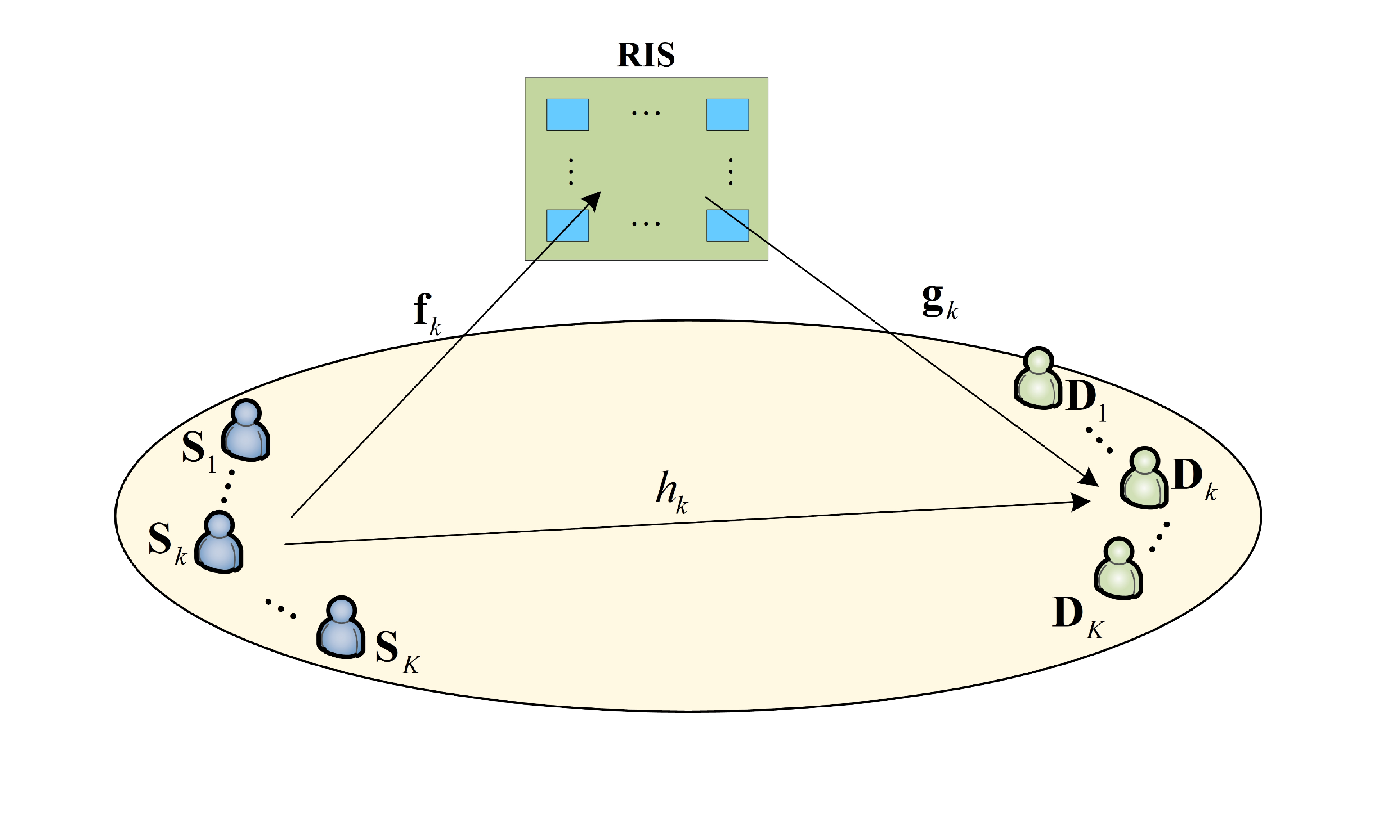}\vspace{-6mm}
		\caption{{A cooperative network with multiple-user pairs and  single RIS.}}\label{f:system_mod}
	\end{center}\vspace{-5mm}
\end{figure}
We examine a wireless network comprising a RIS with $M$ passive reflecting elements (REs), as illustrated in Fig.~\ref{f:system_mod}. The network comprises $K$ source-destination pairs denoted as $\text{S}_k$, $\text{D}_k$, and $\text{S}_k-\text{D}_k$, respectively, for $k=1,\cdots,K$. The RIS has its own controller for phase adjustment, and we focus on the transmission of $\text{S}_k-\text{D}_k$ using the $m$th element of the RIS, where $m=1,\cdots,M$. The power budget for each source is $P_t$.
We denote the fading coefficients  of $\text{S}_k-\text{D}_k$ link as $h_k$, the first-hop of the RIS link as $f_{k,m}$, and the second-hop of the RIS link as $g_{k,m}$. We assume independent multipath complex Gaussian fading for all the links, along with large-scale path-loss. 
We denote the distances associated with the direct link of $\text{S}_k-\text{D}_k$ as $d_k$, the first hop and the second hop via the RIS as $d_{k,1}$ and $d_{k,2}$, respectively. 
Thus, we have $\mathbf{f}_{k}=[{f}_{k,m}]\in \mathbb{C}^{M\times 1}$, $\mathbf{g}_{k}=[{g}_{k,m}] \in \mathbb{C}^{M\times 1}$, $h_k\sim {\mathcal{CN}}(0,d_{k}^{-\alpha_1})$, ${f}_{k,m}\sim \mathbf{\mathcal{CN}}(0,d_{k,1}^{-\alpha_2})$, and $g_{k,m}\sim \mathbf{\mathcal{CN}}(0,d_{k,2}^{-\alpha_2})$, where $\alpha_1$ and $\alpha_2$ denote the path-loss exponents for the direct link and the RIS link, respectively. The statistical information of the network, i.e., distances $d_k, d_{k,1}, d_{k,2}, \forall k$, is available at the $\text{D}_k$s.
The additive white Gaussian noise (AWGN) at the RIS and $\text{D}_k$s has noise power $N_0$. For path loss, we denote the antenna gains at each $\text{S}_k$ and $\text{D}_k$ as $G_t$ and $G_r$, respectively, and the reference path loss at $1$\,m distance as $\beta_0$. Channels' coherence time is $\tau_d$.

\vspace{-0.1cm}
\subsection{Framework of the CSMA/CA Protocol}\label{sub:protocol}	
We use the CSMA/CA protocol for  channel access in our RIS-assisted multi-user network~\cite{Mohammad2020,Huang2021,Xuelin2021,Xie2023wcl}. 
The protocol includes request-to-send (RTS) and clear-to-send (CTS), each with time durations $\tau_{R}$ and $\tau_{C}$, respectively. It enables users to initiate transmissions without admission and guarantees that only one user-pair can communicate at a time. 

In the first stage, $\text{S}_k-\text{D}_k$ pairs contend for channel access over slots with minimum duration $\delta$ using the CSMA/CA protocol. $\text{S}_k$ sends an RTS packet with probability $p_k$ and the MAC protocol proceeds only if a winner is determined, denoted $\text{S}_w-\text{D}_w$. The winner decides to communicate through the direct link, both the direct and RIS links, or not communicate, based on the achievable rate of the direct link, calculated using the direct channel gain $h_w$, and the achievable rate of both links, calculated using the direct channel gain $h_w$ and RIS links' statistics. The direct channel gain $h_w$ is estimated at $\text{D}_w$ upon receiving the RTS, and RIS links' statistical knowledge is assumed to be available at $\text{D}_w$.
The destination $\text{D}_w$ then makes the decision by comparing the achievable rates as follows:
\begin{enumerate}
	\item {\tt stop}: If the achievable rate of the direct link is high enough, $\text{D}_w$ selects to use only the direct link, and broadcasts a CTS to all nodes. ${\text{S}}_w$ then transmits data to ${\text{D}}_w$ during the period $(\tau_d-{\tau_{M,1}})$, while all other sources remain silent within the same period. The total time for message exchange is denoted as $\tau_{M,1}=\tau_R+\tau_C$.
	
	\item {\tt continue}: If both  rates are low, $\text{D}_w$ decides not to transmit and broadcasts a CTS to all nodes. This prompts all $\text{S}_k$s to restart channel contention in the next time slot.
	
	\item {\tt assist RIS}: If the achievable rate of the direct link is low but the expected achievable rate of RIS-assisted links is sufficiently high, $\text{D}_w$ delays its decision of whether to {\tt stop} or {\tt continue} until it obtains instantaneous channel knowledge of the RIS link.  	In particular, 		$\text{D}_w$ sends a CTS to the controller of the RIS.
	After receiving the CTS, the RIS controller activates (or switch-on) the RIS. Then $\text{S}_w$ sends RIS training pilots to estimate the instantaneous cascaded channel gain, i.e., $\mathbf{f}_{k}\astrosun\mathbf{g}_{k}$  
	within a time period of $\tau_{p}$, where $\tau_{p}$ denotes the pilot time period~\cite{Qingqing_survey}. This allows $\text{D}_w$ to calculate the instantaneous achievable rate of the direct and RIS-assisted communications. Based on this achievable rate, $\text{D}_w$ makes one of the following decisions:
	\begin{enumerate}
		\item {\tt stop:} If the achievable rate is sufficiently high, $\text{D}_w$ uses RIS-assisted communications and broadcasts a CTS to all nodes. Then $\text{S}_w$ transmits data assisted by the RIS 
		within $(\tau_d-\tau_{M,2})$ time, here $\tau_{M,2}=\tau_{M,1}+\tau_{p}+\tau_{C}$ denotes the total time for pilot and message exchange.
		\item {\tt continue}: If the achievable rate is low, $\text{D}_w$ abandons communications and broadcasts a CTS, prompting all $\text{S}_k$s to contend for the channel in the next time slot. 
	\end{enumerate}
\end{enumerate}
After a successful channel access by {\tt stop}, the next round of channel access among all user pairs initiates the next possible data transmission. 
The system has two decision levels: the winner source's destination decides to {\tt stop}, {\tt continue}, or {\tt assist RIS} based on direct link rate. If {\tt assist RIS}, the destination evaluates RIS-assisted channels to decide to  {\tt stop} or {\tt continue}.
\vspace{-0.2cm}
\section{Problem formulation} \label{s_DMAC_protocol}
To maximize system throughput, we formulate the problem for RIS-assisted channel access. Using the distributed MAC protocol from Section~\ref{sub:protocol}, we use $w(n)\in{1,...,K}$ to represent the source index of the winning pair after $n$th successful channel contention within one data transmission round. We examine both direct and RIS-assisted links for the winner pair $\text{S}_{w(n)}-\text{D}_{w(n)}$, with the achievable rate of the direct link given as
\begin{align}\label{eq_Rkd}
	R_{w(n),d}=\log_2 \left(1+ \gamma_{w(n),d} \right) 
\end{align}
where $\gamma_{w(n),d}=\bar\rho |h_{w(n)}(n)|^2$, $\bar\rho=P_tG_tG_r\beta_0/N_0$ and 
$h_{w(n)}(n)$ represents its direct link instantaneous CSI.

The matrix that includes the reflecting coefficients of RIS  is denoted as $\mathbf{\Phi}_{w(n)}=\text{diag}([e^{j\phi_{w(n),1}},...,e^{j\phi_{w(n),M}}])\in \mathbb{C}^{M\times M}$, where $\phi_{w(n),m}\in[0,2\pi)$ represents the adjustable angle of the $m$th element of the RIS. To maximize throughput, the optimal reflecting coefficients are given by $\mathbf{\Phi}_{w(n)}^*$ with $\phi_{w(n),m}^*=\arg({h_{w(n)}(n)})-\arg({{f}_{w(n),m}(n)}\cdot{g}_{w(n),m}(n))$, and ${{f}_{w(n),m}(n)}\cdot{g}_{w(n),m}(n)$ represents RIS cascaded link instantaneous CSI for $m=1,\cdots,M$. The achievable rate of the winner pair is then 
\begin{align}\label{equ:2}
	R_{w(n),r}=\log_2(1+\gamma_{w(n),r}(n))
\end{align}
{where} $\gamma_{w(n),r}(n)=\overline{\rho}\Big(|h_{w(n)}(n)|+{|\mathbf{f}_{w(n)}(n)|^T|\mathbf{g}_{w(n)}(n)|}\Big)^2$.

\vspace{-0.2cm}
\subsection{Sequential observation process} 
When a source wins a channel contention and the winning destination estimates its CSI, an {\it observation} (Obs.) occurs. The achievable rate in the first-stage Obs. requires direct link CSI, while the achievable rate in the second-stage Obs. can only be calculated with optimal reflecting coefficients obtained from cascaded CSI. The first-stage Obs. is always performed, while the second-stage Obs. may not be done every time, as discussed in Section~\ref{sub:protocol}. To differentiate between them, odd numbers are used to index the first-stage Obs., and even numbers are used to index the second-stage Obs. Therefore, for the $n$th successful channel contention, two observations are available: Obs.~($2n-1$) and Obs.~($2n$).
For Obs.~($2n-1$), $\text{D}_{w(n)}$ observes $h_{w(n)}(n)$ and the time spent for the $n$th successful channel contention ${t_{w(n)}}(n), n=1,\cdots,\infty$. The number of time slots for a successful channel contention is an i.i.d. geometric random variable~{\cite{Wei2020acm}}. For Obs.~($2n$), $\text{D}_{w(n)}$ observes $\mathbf{f}_{w(n)}(n)\astrosun\mathbf{g}_{w(n)}(n)$. The observed information at Obs.~($2n-1$) and Obs.~($2n$) are denoted as $\mathscr{F}_n=\{w(n),t_{w(n)}(n),h_{w(n)}(n)\}$ and  $\mathscr{G}_{n}=\{\mathbf{f}_{w(n)}(n)\astrosun\mathbf{g}_{w(n)}(n)\}$, respectively. 

\vspace{-0.2cm}
\subsection{Observation path} 
The {\it observation path} $\mathbf{a}$ is the sequence of states for all observations from the first to the $n$th successful channel contention. The path is denoted by odd and even indices, where odd index is always $1$ representing direct-link CSI, and even index represents RIS assist with $a_l=1$ indicating RIS link is estimated and $a_l=0$ otherwise. The observation path $\mathbf{a}$ can be expressed as a vector:
\begin{equation*}
\mathbf{a}=\begin{cases}
(1,a_1,\cdots,1,a_l,...,1), & \text{Obs.}(2n-1)\\
(1,a_1,1,\cdots,1,a_l,...,1,a_n), & \text{Obs.}(2n)
\end{cases}
\end{equation*}
\vspace{-0.3cm}
\subsection{Instantaneous rewards}
For observation path $\mathbf{a}$, we define the reward function $Y_{\mathbf{a}}$ as the transmitted data in {\it bits} by {\tt stop}. For Obs.~($2n-1$) and Obs.~($2n$), we have instantaneous rewards 
\begin{equation}\label{e:reward_definition}
Y_{\mathbf{a}}=\begin{cases}
(\tau_d-\tau_{M,1}) R_{w(n),d}, & \text{Obs.}(2n-1)\\
(\tau_d-\tau_{M,2}) R_{w(n),r}, & \text{Obs.}(2n) \text{ and } a_n\neq0 \\
-\infty, & \text{Obs.}~(2n) \text{ and } a_n=0
\end{cases}
\end{equation}
Additionally, to prevent the winning pair from ending with a decision of {\tt continue}, $Y_{\mathbf{a}}=-\infty$ is defined without any restrictions on feasible MAC strategies. Obs.~($2n-1$) and Obs.~($2n$) are associated with the time cost function $T_{\mathbf{a}}$, which calculates the overall time spent waiting from Obs.~($1$) to Obs.~($2n-1$) or Obs.~($2n$), including data transmission time. Therefore, the function can be formulated as
\begin{equation}\label{e:time_definition}
	T_{\mathbf{a}} = \sum_{l = 1}^n {t_{w(l)}}(l) + \sum_{l = 1}^{n - 1} \mathbb{I}[a_{l}\neq0]({\tau_{C}+{\tau_p}})  + {\tau_d}-\tau_{M,1}.
\end{equation}
\vspace{-0.5cm}
\subsection{Optimization goal} 
We define $\mathbf{a}_s$ as the observation path leading to a {\tt stop} decision in the sequential decision-making process of the distributed MAC strategy described in Section~\ref{sub:protocol}. The transmitted data in {\it bits} and the time cost of strategy $\mathbf{a}_s$ are denoted as $Y_{\mathbf{a}_s}$ and $T_{\mathbf{a}_s}$, respectively, both of which are random variables due to the stochastic nature of the sequential observation process. By repeating strategy $\mathbf{a}_s$ over sufficient rounds of data transmission, the sample average system throughput approaches the average system throughput expressed as ${\mathbb{E}[{Y_{\mathbf{a}_s}}]}/{\mathbb{E}[{T_{\mathbf{a}_s}}]}$. Our objective is to maximize the average system throughput by identifying the optimal strategy $\mathbf{a}_s^*$ among all feasible channel access strategies $\mathbf{a}_s$ and the maximal average system throughput, which can be evaluated, respectively, as
\begin{equation}\label{equ:original_problem}
	\mathbf{a}_s^*=\arg\sup_{\mathbf{a}_s > 0} \frac{{\mathbb{E}[{Y_{\mathbf{a}_s}}}]}{{\mathbb{E}[{T_{\mathbf{a}_s}}]}} \text{ and } \lambda^*=\frac{{\mathbb{E}[{Y_{\mathbf{a}_s^*}}}]}{{\mathbb{E}[{T_{\mathbf{a}_s^*}}]}}.
\end{equation}

\section{Optimal MAC Strategy and  Algorithm}\label{optimal_strategy1}

In this section, our objective is to determine the optimal MAC strategy $\mathbf{a}_s^*$  that maximizes the average system throughput $\lambda^*$ as in \eqref{equ:original_problem}.
\vspace{-0.3cm}
\subsection{Optimal Strategy} 
Using equivalent transformation and optimal sequential observation planned decision theory, we obtain the optimal MAC strategy $\mathbf{a}_s^*$, which maximizes system throughput~$\sup_{\mathbf{a}_s > 0} {\mathbb{E}[{Y_{\mathbf{a}_s}}]}/{\mathbb{E}[{T_{\mathbf{a}_s}}]}$. The optimal MAC strategy, as described in Theorem~\ref{th:optimal_rule1}, is presented below.

\begin{theory}\label{th:optimal_rule1}
	For a round of successful data transmission, an optimal RIS-aided MAC strategy $\mathbf{a}_s^*$ is in the form that: beginning from $n=1$, after $n$th successful channel contention, the $w(n)$-th source-destination pair wins the channel and $\text{D}_{w(n)}$ obtains CSI $h_{w(n)}(n)$.
	\begin{enumerate}
		\item if $(\tau_d-\tau_{M,1}) (R_{w(n),d}(n)-\lambda^*)\ge \max\{\Lambda_{w(n)}(\lambda^*,|h_{w(n)}(n)|),0\}$, $\text{D}_{w(n)}$ {\tt stop}, and $\text{S}_{w(n)}$ transmits data to $\text{D}_{w(n)}$ in direct link. Function $\Lambda_{w(n)}(\lambda^*,|h_{w(n)}(n)|)$ representing the expected maximal reward if RIS link is estimated after the $n$th successful channel contention, is expressed as
		\begin{align}\label{equ:lambda_def}
				& \Lambda_{w(n)}  (\lambda^*,|h_{w(n)}(n)|)=\mathbb{E}\big[\max\big\{(\tau_d-\tau_{M,2})\nonumber\\
    &\quad			R_{w(n),r}(n)			
			-\lambda^*(\tau_d-\tau_{M,1}),-\lambda^*(\tau_p+\tau_C)\big\}\big].
		\end{align}
		The maximal average system throughput $\lambda^*$
	can be uniquely solved by equation
	\begin{align}\label{equ:bellman1}
		\frac{1}{K}\sum\limits_{k=1}^K\int\limits_{0}^{+\infty}
		\max & \Big\{(\log_2(1+\overline{\rho}h_s^2)-\lambda^*)(\tau_d-\tau_{M,1}),\nonumber\\
  &\Lambda_{k}(\lambda^*,h_s),0
		\Big\} d F_{|h_{k}|}(h_s)  =
		\lambda^*\tau_o
	\end{align}
	where $F_{|h_{k}|}(h_s)$ denotes the c.d.f. of $|h_{k}|$ which is Rayleigh distributed, and $\tau_o$ denotes the average duration of a successful channel contention $t_w(n)$, expressed as 
		$\tau_o=\tau_{M,1}+ \prod\limits_{k=1}^K(1-p_k)\frac{\delta}{p_s} +\big(1-\prod\limits_{k=1}^K(1-p_k)-p_s\big)\frac{\tau_{R}}{p_s}$,
		with $p_s=\sum\limits_{k=1}^Kp_k\prod\limits_{i\neq k}(1-p_i)$. 	
		\item if $\max\big\{(\tau_d-\tau_{M,1}) (R_{w(n),d}(n)-\lambda^*),\Lambda_{w(n)}(\lambda^*,|h_{w(n)}(n)|)\big\}\!<\!0$, $\text{D}_{w(n)}$ {\tt continue}, gives up transmission and then all sources re-contend the channel.
		\item otherwise, $\text{D}_{w(n)}$ {\tt assist RIS}, and then estimates cascaded RIS channel gains $\mathbf{f}_{w(n)}(n)\astrosun$ $\mathbf{g}_{w(n)}(n)$.
		\begin{enumerate}
			\item if reward $R_{w(n),r}(n)\ge \lambda^*$, $\text{D}_{w(n)}$ {\tt stop}, and $\text{S}_{w(n)}$ transmits data assisted by the RIS with reflecting coefficients matrix $\mathbf{\Phi}_{w(n)}^*$.
			\item otherwise, $\text{D}_{w(n)}$ {\tt continue}, gives up transmission and then all sources re-contend the channel.
		\end{enumerate}
	\end{enumerate}
	\end{theory}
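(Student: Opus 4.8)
The plan is to read \eqref{equ:original_problem} as a rate-of-return (fractional) optimal-stopping problem over i.i.d.\ renewal cycles and to solve it by Dinkelbach parametrization combined with stationary dynamic programming. I would introduce $V(\lambda):=\sup_{\mathbf{a}_s}\mathbb{E}[Y_{\mathbf{a}_s}-\lambda T_{\mathbf{a}_s}]$. For each fixed strategy, $\lambda\mapsto\mathbb{E}[Y_{\mathbf{a}_s}]-\lambda\,\mathbb{E}[T_{\mathbf{a}_s}]$ is affine with slope $-\mathbb{E}[T_{\mathbf{a}_s}]$, and from \eqref{e:time_definition} every admissible strategy obeys $T_{\mathbf{a}_s}\ge\tau_d-\tau_{M,1}>0$, so the slope is bounded away from $0$. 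Hence $V$ is a supremum of strictly decreasing affine maps: convex, continuous, and strictly decreasing, with $V(0)>0$ (positive throughput is attainable) and $V(\lambda)\to-\infty$. Thus $V(\lambda)=0$ has a unique root, and I will show this root equals the optimum $\lambda^*$ of \eqref{equ:original_problem} via the standard equivalence $V(\lambda)>0\iff\exists\,\mathbf{a}_s$ with $\mathbb{E}[Y_{\mathbf{a}_s}]/\mathbb{E}[T_{\mathbf{a}_s}]>\lambda$, so that any strategy attaining $V(\lambda^*)=0$ attains the supremal ratio; the $-\infty$ convention in \eqref{e:reward_definition} guarantees every admissible strategy terminates in a {\tt stop}, making the ratio well defined.

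Next I would exploit the renewal structure: the winning index, its contention time $t_{w}(n)$, its direct gain $h_{w(n)}$, and its cascaded gain are drawn afresh and independently each round, and a {\tt continue} returns the system to the identical pre-contention state. Hence the parametric problem is stationary and $V(\lambda)$ obeys a Bellman fixed point in which the {\tt continue} branch again carries value $V(\lambda)$. Writing the net value of each first-stage action and pulling out the mean contention cost $-\lambda\,\mathbb{E}[t_w]=-\lambda\tau_o$ gives
\begin{equation*}
V(\lambda) = -\lambda\tau_o + \mathbb{E}_{w}\Big[\max\{(\tau_d-\tau_{M,1})(R_{w,d}-\lambda),\ \Lambda_{w}(\lambda,|h_{w}|),\ V(\lambda)\}\Big].
\end{equation*}
Setting $\lambda=\lambda^*$ and using $V(\lambda^*)=0$ replaces the {\tt continue} branch by $0$ and, after averaging over the winner being equally likely among the $K$ pairs and integrating against the Rayleigh law $F_{|h_k|}$, yields precisely \eqref{equ:bellman1}; this simultaneously explains why the threshold $0$ appears in cases~1)--2).

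The nested thresholds then follow by backward induction within one cycle. Conditioned on {\tt assist RIS}, the winner has paid the probing cost $-\lambda(\tau_p+\tau_C)$ and chooses between {\tt stop} with the RIS, of net value $(\tau_d-\tau_{M,2})(R_{w,r}-\lambda)$, and {\tt continue}, of value $V(\lambda^*)=0$; since $\tau_d-\tau_{M,2}>0$, the inner optimum is {\tt stop} iff $R_{w,r}\ge\lambda^*$, i.e.\ cases~3a)--3b). Taking the expectation of $-\lambda(\tau_p+\tau_C)+\max\{(\tau_d-\tau_{M,2})(R_{w,r}-\lambda),0\}$ over the cascaded CSI and using $\tau_{M,2}=\tau_{M,1}+\tau_p+\tau_C$ reproduces $\Lambda_{w}$ exactly as in \eqref{equ:lambda_def}. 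Substituting this value at the first level, the three-way comparison of {\tt stop}-direct against $\Lambda_w$ against $0$ is precisely cases~1)--3). Crucially, because every {\tt continue} faces an identical fresh cycle, the continuation value is the \emph{constant} $V(\lambda^*)$, so the per-epoch greedy comparison is globally optimal and no separate monotone-case argument beyond stationarity is needed.

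The main obstacle is the rigor of the parametric reduction and the stationary fixed point, not the algebra. I must verify that an optimal stationary threshold strategy actually attains the supremum defining $V(\lambda)$, that $\mathbb{E}[T_{\mathbf{a}_s}]<\infty$ and $\mathbb{E}[Y_{\mathbf{a}_s}]<\infty$ under the candidate rule (so the ratio is finite and the $-\infty$ convention for {\tt continue} is almost surely never realized at the optimum), and that the Bellman operator admits a unique bounded fixed point equal to $V$. Integrability I would obtain from the light tails of the Rayleigh gains and the resulting logarithmic rates together with the geometric contention time; attainment and uniqueness I would obtain from the strict monotonicity of $V$ established above combined with a renewal/contraction argument on the stationary operator. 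With these in hand, the closed-form thresholds and \eqref{equ:bellman1} follow as outlined.
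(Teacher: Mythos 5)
Your proposal is correct and follows essentially the same route as the paper: the parametric transformation $Z_{\mathbf{a}}(\lambda)=Y_{\mathbf{a}}-\lambda T_{\mathbf{a}}$ with equivalence to the ratio problem at $\lambda=\lambda^*$, followed by a stationary Bellman recursion whose continuation value vanishes at $\lambda^*$, which yields exactly the nested thresholds of cases 1)--3), the expression \eqref{equ:lambda_def} for $\Lambda_{w(n)}$, and the fixed-point equation \eqref{equ:bellman1}. The only difference is one of packaging: where the paper invokes Theorem~1 of its optimal-stopping reference and Theorem~2.14 and Lemma~2.8 of its sequentially planned decision theory reference, you rederive those ingredients from the Dinkelbach convexity and strict monotonicity of $V(\lambda)$ and from the i.i.d.\ renewal structure, correctly flagging attainment, integrability, and the contraction property of the stationary operator as the points still requiring verification.
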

	\begin{proof}
		
		Let $Z_{\mathbf{a}}(\lambda)=Y_{\mathbf{a}}-\lambda T_{\mathbf{a}}$, where $\lambda$ is the price charged for time spent. A strategy that achieves $\sup\limits_{\mathbf{a}_s> 0}\mathbb{E}[Z_{\mathbf{a}_s}(\lambda)]$ for a given $\lambda>0$ is denoted as $\mathbf{a}_s(\lambda)$, and the corresponding optimal strategy is denoted as $\mathbf{a}_s^*(\lambda)$, which is expressed as  
		\begin{equation} \label{equ:tranfer_prob}
			\mathbf{a}_s^*(\lambda)=\arg\sup\limits_{\mathbf{a}_s> 0}Z_{\mathbf{a}_s}(\lambda)=\arg\sup\limits_{\mathbf{a}_s>0}\{Y_{\mathbf{a}_s}-\lambda T_{\mathbf{a}_s}\}.
		\end{equation}
	As problem (\ref{equ:tranfer_prob}) is the alternative problem which is equivalent to \eqref{equ:original_problem} when $\lambda=\lambda^*$ according to Theorem~1 in \cite{fergusonoptimal},
		the optimal strategy $\mathbf{a}_s^*$ 
		is $\mathbf{a}_s^*(\lambda^*)$, 
  and is solved by following steps.
		
		
		\subsubsection{Find the optimal strategy $\mathbf{a}_s^*(\lambda)$ for price $\lambda>0$} 

		Based on Theorem~2.14 in \cite{Schmitz_N_book}, 
  for any price $\lambda>0$, an optimal strategy $\mathbf{a}_s^*(\lambda)$ is that: Starting from $n=1$, 
  it takes the following procedure until {\tt stop}. In particular,
		\begin{itemize}
			\item at Obs.~$(2n-1)$, it is optimal to {\tt stop} with $\mathbf{a}_s^*(\lambda)=\mathbf{a}$ if $Z_{\mathbf{a}}\ge V_{\mathbf{a}}^1$, and {\tt continue} observation otherwise. If {\tt continue} is optimal, we update the observation path by appending the action as $\mathbf{a}=(\mathbf{a},a_n^*)$ where $\mathbb{E}\big[U_{(\mathbf{a},a_n^*)}|\mathscr{B}_{\mathbf{a}}\big]\!=\!V_{\mathbf{a}}^1$, $a_n^*\in\{0,1\}$, and the observed information for the observation path $\mathbf{a}$ is denoted as $\mathscr{B}_{\mathbf{a}}$;
			\item at Obs.~($2n$), it is optimal to {\tt stop} with $\mathbf{a}_s^*(\lambda)=\mathbf{a}$ if $Z_{\mathbf{a}}\ge V_{\mathbf{a}}^2$, and {\tt continue} observation otherwise. If {\tt continue}, update observation path by adding $1$, i.e.,  $\mathbf{a}=(\mathbf{a},1)$.
		\end{itemize}
		For the strategy, the reward function for observation path $\mathbf{a}$ is defined as $U_{\mathbf{a}} \overset{}{=}\sup\limits_{\mathbf{b}\ge {\mathbf{a}}}\mathbb{E}\big[Y_{\mathbf{b}}-\lambda T_{\mathbf{b}}\big|\mathscr{B}_{\mathbf{a}}\big]$\footnote{For paths $\mathbf{a}$ and $\mathbf{b}$, $\mathbf{b}\ge\mathbf{a}$ if and only if $|\mathbf{b}|\ge |\mathbf{a}|$ and path $\mathbf{a}$ is the initial segment of path $\mathbf{b}$. Here $|\cdot|$ represents the cardinality.}, which represents the expected maximal reward based on the observed information $\mathscr{B}_{\mathbf{a}}$ so far. 
		For Obs.~($2n-1$), we define the reward function $V_{\mathbf{a}}^1\overset{}{=}\max\limits_{j=0,1}\mathbb{E}\big[U_{(\mathbf{a},j)}\big|\mathscr{B}_{\mathbf{a}}\big]$, which represents the expected maximal rewards if {\tt stop} is not made at Obs.~($2n-1$). For Obs.~($2n$), we define the reward function $V_{\mathbf{a}}^2\overset{}{=}\mathbb{E}\big[U_{(\mathbf{a},1)}\big|\mathscr{B}_{\mathbf{a}}\big]$, which represents the expected maximal rewards if {\tt stop} is not made at Obs.~($2n$). If $|\mathbf{a}|=0$ (i.e., no observation has been made yet), we calculate $U_{\mathbf{a}}$ as $U_0=\sup\limits_{\mathbf{a}_s> 0}\mathbb{E}[Z_{\mathbf{a}_s}(\lambda)]$.
		
\subsubsection{Replace $\lambda$ with $\lambda^*$} 
	
By replacing $\lambda$ with $\lambda^*$, to implement the optimal strategy $\mathbf{a}_s^*(\lambda^*)$, we calculate $U_{\mathbf a}$, $V_{\mathbf a}^1$ and $V_{\mathbf a}^2$ 
using the Bellman equation 
in Lemma~2.8 in \cite{Schmitz_N_book} as:
%
	 \begin{align}
	 	&			U_{\mathbf a}=\max\big\{(\tau_d-\tau_{M,1}) (R_{w(n),d}(n)-\lambda^*),\nonumber\\&~~~~~~~~~~~
	 	\Lambda_{w(n)}(\lambda^*,|h_{w(n)}(n)|),0
	 	\big\}	\!-\!\lambda^* T_c(n)\label{e:expres3}\\
		&V_{\mathbf{a}}^1=
		\max\big\{\Lambda_{w(n)}(\lambda^*,|h_{w(n)}(n)|),0\big\}-\lambda^* T_c(n)\label{equ:def_V1}
		\\&V_{\mathbf{a}}^2=-\lambda^* T_c(n)-\lambda^*\mathbb{I}{[|a_{n}|>0]}(\tau_{p}+\tau_{C})\label{equ:V2}
	\end{align}
where $T_c(n) {=}\sum\limits_{l=1}^n t_{w(l)}(l) +\sum\limits_{l=1}^{n\!-\!1}\mathbb{I}{[a_{l}>0]}(\tau_{p}+\tau_{C})$.

By making expectation on both sides of \eqref{e:expres3} with $|\mathbf{a}|=0$, we can solve $\lambda^*$ using the equation $\mathbb{E}\big[\max\big\{(\tau_d\!-\!\tau_{M,1})(R_{w(n),d}(n)\!-\!\lambda^*),\Lambda_{w(n)}(\lambda^*,|h_{w(n)}(n)|),0\big\}\big] \!=\!\lambda^*\tau_o$.
This can be rewritten as \eqref{equ:bellman1}, and the uniqueness of $\lambda^*$ follows from the monotonicity of its both sides.
	
%
		
	\subsubsection{Solution $\mathbf{a}_s^*$ and decision conditions} 
	
	Since the optimal solution to the original problem is $\mathbf{a}_s^*=\mathbf{a}_s^*(\lambda^*)$, we analyze the optimal decision after each observation under the MAC framework in Section~\ref{sub:protocol}.
	First, we consider the case at Obs.~$(2n-1)$. By definitions in (\ref{e:reward_definition}), (\ref{e:time_definition}) and 
 (\ref{equ:def_V1}), the condition to {\tt stop} is rewritten as $(\tau_d-\tau_{M,1})	(R_{w(n),d}(n)\!-\!\lambda^*)\ge \max\big\{\Lambda_{w(n)}(\lambda^*,|h_{w(n)}(n)|),0\big\}$. 
	Similarly, 	
 the condition to {\tt continue}  is rewritten as $\max\big\{(\tau_d-\tau_{M,1}) (R_{w(n),d}(n)-\lambda^*),\Lambda_{w(n)}(\lambda^*,|h_{w(n)}(n)|)\big\}
		< 0$.
	Then the condition to {\tt assist RIS} can be derived.	
	Secondly, we consider the case at Obs.~$(2n)$. 
	By definition in (\ref{equ:V2}), the condition for {\tt stop} is rewritten as $R_{w(n),r}(n)\ge\lambda^*$. 
Finally, following the MAC framework in Section~\ref{sub:protocol}, by substituting above conditions into strategy $\mathbf{a}_s^*$, the optimal MAC strategy is derived.
	\end{proof}
\vspace{-0.3cm}
Theorem~1 provides an optimal MAC strategy for the network, requiring efficient calculation of  $\lambda^*$ and  $\Lambda_{w(n)}(\lambda,|h_{w(n)}(n)|)$ in (\ref{equ:bellman1}) and (\ref{equ:lambda_def}), respectively. The former can be calculated offline by solving (\ref{equ:bellman1}), while the latter requires online computation for each instantaneous CSI. We next propose a closed-form solution for the threshold function, enabling efficient calculation for the optimal strategy.
\vspace{-0.2cm}
\subsection{A close-form expression for $\Lambda_{k}(\lambda,h_s)$}
\vspace{-0.1cm}

For winner pair $w(n)=k$, by using (\ref{equ:2}) and (\ref{equ:lambda_def}), we have
\vspace{-0.1cm}
\begin{align}\label{equ:ergodic_cap}
	&\Lambda_{k}(\lambda,h_s)\stackrel{\text{(a)}}{\approx} 
		(\tau_d-\tau_{M,2})\log_2\big(1+\mathbb{E}\big[\max\big\{\overline{\rho}\big(h_s \nonumber\\&
  \qquad\quad\quad+|\mathbf{f}_{k}(n)|^T|\mathbf{g}_{k}(n)|\big)^2,2^\lambda-1\big\}\big]\big) -\lambda(\tau_d-\tau_{M,1}) \nonumber\\
	& 	\stackrel{\text{(b)}}{\approx} (\tau_d-\tau_{M,2})		\log_2\big(1+\Omega(\lambda,h_s,\mu_{k},\sigma_{k})\big)
	- \lambda(\tau_d-\tau_{M,1})\nonumber\\ &\triangleq \overline{\Lambda}_{k}(\lambda,h_s)
\end{align}
where (a) is derived by applying the Taylor series, while (b) is obtained by taking the expectation utilizing the Central Limit Theorem (CLT), i.e., $|\mathbf{f}_{k}(n)|^T|\mathbf{g}_{k}(n)|\sim\mathcal{N}(\mu_{k},\sigma_{k}^2)$ with $\mu_{k} =\frac{M\pi }{4}d_{k,1}^{-\alpha_2/2} d_{k,2}^{-\alpha_2/2}$ and $\sigma_{k}^2= M(1-\frac{\pi^2}{16})d_{k,1}^{-\alpha_2} d_{k,2}^{-\alpha_2}$, and  
\begin{align}	&\Omega(\lambda,h_s,\mu,\sigma) 
\!=\!
	 \frac{(2^\lambda\!-\!1)\left(\text{erf}\left(\frac{\mu }{\sqrt{2} \sigma }\right)\!-\!\text{erf}\left(\frac{-\sqrt{\frac{2^\lambda\!-\!1}{\overline{\rho}}}\!+\!h_s\!+\!\mu }{\sqrt{2} \sigma }\!\right)\!\right)}{2}  \nonumber\\ 
	& +\frac{\left(\overline{\rho} \sigma  (h_s+\mu)+\sqrt{\overline{\rho}(2^\lambda-1)} \sigma\right)e^{-\frac{\left(\sqrt{2^\lambda-1}-\sqrt{\overline{\rho}} (h_s+\mu)\right)^2}{2 \overline{\rho} \sigma^2}}}{\sqrt{2 \pi }}  
	\nonumber\\ 
	&+\frac{\overline{\rho} \left((h_s+\mu )^2+\sigma^2\right)\,\text{erfc}\left(\frac{\sqrt{2^\lambda-1}-\sqrt{\overline{\rho}} (h_s+\mu)}{\sqrt{2\overline{\rho}} \sigma}\right)}{2}.   
\end{align}
Here, $\text{erf}(x)=\frac{2}{\sqrt{\pi}}\int_{0}^{x}e^{-t^2}\,dt$
and $\text{erfc}(x)=1-\text{erf}(x)$.  

We can use the closed-form expression to design an iterative algorithm for calculating the average throughput $\lambda^*$, where $\epsilon$ denotes numerical accuracy. This is outlined in Theorem~2:
\begin{theory}
	The sequence $\{\lambda_l\}, l=1,2,...,\infty$ generated by $\lambda_{l+1}=\lambda_l+\alpha (\frac{1}{K}\sum\limits_{k=1}^K\int\limits_{0}^{+\infty}
	\max\big\{(\log_2(1+\overline{\rho}h_s^2)-\lambda^*)(\tau_d-\tau_{M,1}),0,
	\overline{\Lambda}_{k}(\lambda^*,h_s)
	\big\} d F_{|h_{k}|}(h_s)- \lambda_{l} \tau_o)$		
	with step-size $\alpha$ satisfying $\epsilon\le\alpha\le \frac{2-\epsilon}{\tau_o+\tau_d-\tau_{M,1}}$, converges to the maximal average throughput $\lambda^*$. 
\end{theory}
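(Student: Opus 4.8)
The plan is to read the recursion as a damped fixed-point iteration for the defining equation~\eqref{equ:bellman1} and to prove that it is a contraction precisely on the admissible step-size window. For the per-iterate functional (evaluated at the current iterate $\lambda_l$) write
\begin{align*}
g(\lambda)=\frac{1}{K}\sum_{k=1}^K\int_0^{+\infty}\max\big\{&(\log_2(1+\overline{\rho}h_s^2)-\lambda)(\tau_d-\tau_{M,1}),\,0,\\
&\overline{\Lambda}_{k}(\lambda,h_s)\big\}\,dF_{|h_k|}(h_s),
\end{align*}
and set $\Psi(\lambda)=g(\lambda)-\lambda\tau_o$, so that the recursion becomes $\lambda_{l+1}=F(\lambda_l)$ with $F(\lambda)=\lambda+\alpha\Psi(\lambda)$. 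A point is fixed by $F$ iff $\Psi(\lambda)=0$, which is exactly~\eqref{equ:bellman1}; by Theorem~\ref{th:optimal_rule1} this root exists, is unique, and equals $\lambda^*$. Convergence thus reduces to showing that $F$ is a contraction whose unique fixed point the iterates approach.

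The structural ingredient is a monotone--Lipschitz bound on the integrand in $\lambda$. Each of the three competing branches is non-increasing in $\lambda$ with slope bounded below by $-(\tau_d-\tau_{M,1})$: the first is affine with slope exactly $-(\tau_d-\tau_{M,1})$, the middle is constant, and the RIS branch $\overline{\Lambda}_{k}(\cdot,h_s)$---being the closed form of the expectation $\Lambda_{w(n)}$ in~\eqref{equ:lambda_def}, a maximum of two affine maps with slopes $-(\tau_d-\tau_{M,1})$ and $-(\tau_p+\tau_C)$---has slope in $[-(\tau_d-\tau_{M,1}),\,-(\tau_p+\tau_C)]\subset[-(\tau_d-\tau_{M,1}),0]$. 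Since a pointwise maximum of non-increasing, $(\tau_d-\tau_{M,1})$-Lipschitz functions is again non-increasing and $(\tau_d-\tau_{M,1})$-Lipschitz, integrating over $h_s$ and averaging over $k$ yields, for all $\lambda_1<\lambda_2$,
\[
-(\tau_d-\tau_{M,1})(\lambda_2-\lambda_1)\le g(\lambda_2)-g(\lambda_1)\le 0.
\]

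I would then bound the increments of $F$ directly, sidestepping any differentiability issue at the kinks. For $\lambda_1<\lambda_2$ with $\Delta=\lambda_2-\lambda_1>0$,
\begin{align*}
F(\lambda_2)-F(\lambda_1)&=\Delta+\alpha\big(g(\lambda_2)-g(\lambda_1)-\tau_o\Delta\big)\\
&\in\big[(1-\alpha(\tau_o+\tau_d-\tau_{M,1}))\Delta,\,(1-\alpha\tau_o)\Delta\big].
\end{align*}
The step-size window is engineered so that both endpoints lie in $(-1,1)$ uniformly in $\lambda$: the upper constraint $\alpha\le\frac{2-\epsilon}{\tau_o+\tau_d-\tau_{M,1}}$ gives $1-\alpha(\tau_o+\tau_d-\tau_{M,1})\ge\epsilon-1>-1$; positivity $\alpha\ge\epsilon>0$ gives $1-\alpha\tau_o<1$; and, using $\tau_o<\tau_o+\tau_d-\tau_{M,1}$ (the data time $\tau_d-\tau_{M,1}>0$), the same window forces $\alpha\tau_o<2-\epsilon$, hence $1-\alpha\tau_o>-1$. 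Therefore $|F(\lambda_2)-F(\lambda_1)|\le q\,|\lambda_2-\lambda_1|$ with $q=\max\{|1-\alpha(\tau_o+\tau_d-\tau_{M,1})|,|1-\alpha\tau_o|\}<1$, so $F$ is a contraction and the Banach fixed-point theorem yields geometric convergence $\lambda_l\to\lambda^*$ from any initialization.

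I expect the main obstacle to be the clean justification of the slope bound for the closed-form branch $\overline{\Lambda}_{k}$. Differentiating the expanded expression is messy, since the $\lambda$-dependence enters through the $2^{\lambda}-1$ terms buried in $\Omega(\lambda,h_s,\mu_k,\sigma_k)$ together with the $\mathrm{erf}/\mathrm{erfc}$ factors. The way to avoid the grind is to argue at the level of the defining maximum in~\eqref{equ:lambda_def}: monotonicity and the $(\tau_d-\tau_{M,1})$-Lipschitz property hold branch-by-branch \emph{before} taking expectations and are preserved by the approximations (a)--(b), so the bound transfers to $\overline{\Lambda}_{k}$ without explicit differentiation.
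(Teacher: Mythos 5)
Your proof is correct, but it takes a genuinely different route from the paper. The paper's entire proof is a one-line appeal to Proposition~1.2.3 of Bertsekas's \emph{Nonlinear Programming} (convergence of gradient methods with constant step-size under a Lipschitz-gradient condition): implicitly, $\Psi(\lambda)=g(\lambda)-\lambda\tau_o$ is read as the negative gradient of a strongly convex potential whose gradient is $(\tau_o+\tau_d-\tau_{M,1})$-Lipschitz, and the window $\alpha\in[\epsilon,(2-\epsilon)/(\tau_o+\tau_d-\tau_{M,1})]$ is exactly the $(0,2/L)$ step-size condition of that proposition. You instead recast the recursion as a damped fixed-point map and prove a Banach contraction directly from the two-sided increment bound $-(\tau_o+\tau_d-\tau_{M,1})\Delta\le\Psi(\lambda_2)-\Psi(\lambda_1)\le-\tau_o\Delta$. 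Both arguments hinge on the same structural fact --- that $g$ is non-increasing and $(\tau_d-\tau_{M,1})$-Lipschitz because each branch of the max is --- but yours is self-contained, gives an explicit geometric rate and global convergence from any initialization, and avoids differentiability issues at the kinks by working with difference quotients; the paper's buys brevity at the cost of leaving the verification of the Lipschitz hypothesis (and the identification of the limit with the unique root of \eqref{equ:bellman1}) entirely to the reader. The one spot where you are looser than you should be is the transfer of the slope bound from the exact $\Lambda_k$ of \eqref{equ:lambda_def} to the closed-form surrogate $\overline{\Lambda}_k$ of \eqref{equ:ergodic_cap}: ``preserved by the approximations'' is not automatic, but it can be made precise by noting that $\Omega(\lambda,h_s,\mu,\sigma)=\mathbb{E}[\max\{\overline{\rho}(h_s+V)^2,2^{\lambda}-1\}]\ge 2^{\lambda}-1$, so $\partial_\lambda\log_2(1+\Omega)=2^{\lambda}\Pr[\overline{\rho}(h_s+V)^2\le 2^{\lambda}-1]/(1+\Omega)\in[0,1]$, whence $\partial_\lambda\overline{\Lambda}_k\in[-(\tau_d-\tau_{M,1}),-(\tau_p+\tau_C)]$ as you claim. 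With that patch your argument is complete.
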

\begin{IEEEproof}
	The convergence of $\{\lambda_l\}, l=1,\cdots,\infty$ to $\lambda^*$ can be directly proven based on the Lipschitz continuity condition presented in Proposition~1.2.3 in~\cite{Berts1999}. 
\end{IEEEproof}

\vspace{-2mm}
\subsection{Distributed Algorithm in Low Complexity}\label{optimal_strategy2}
Using Theorem~1, we can calculate the function $\overline{\Lambda}_{w(n)}(\lambda^*,|h_{w(n)}(n)|)$ based on the instantaneous channel gain $h_{w(n)}(n)$ and make decisions. To reduce calculation complexity further, we derive a decision criteria under a loose condition for the maximal average throughput $\lambda^*\ge 1$. 

\begin{lemma}\label{l:form_decision}
	For each pair $\text{S}_{k}$-$\text{D}_k$, it never opts for the {\tt assist RIS} decision if $\overline{\Lambda}_{k}(\lambda^*,\sqrt{2^{\lambda^*}-1})\le 0$. Otherwise, the decision to {\tt assist RIS} can be optimal.
\end{lemma}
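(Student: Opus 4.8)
The plan is to convert the three-way rule of Theorem~\ref{th:optimal_rule1} into a test on a single scalar and then exploit monotonicity. Writing $A(h_s)\triangleq(\tau_d-\tau_{M,1})(\log_2(1+\overline{\rho}h_s^2)-\lambda^*)$ for the net direct-link reward, a short case analysis of the {\tt stop}/{\tt continue} conditions in Theorem~\ref{th:optimal_rule1} shows that {\tt assist RIS} is chosen for a realization $|h_{w(n)}|=h_s$ \emph{exactly} when $\overline{\Lambda}_{k}(\lambda^*,h_s)\ge 0$ and $A(h_s)<\overline{\Lambda}_{k}(\lambda^*,h_s)$. Thus the lemma amounts to deciding whether $\mathcal{R}_k\triangleq\{h_s\ge 0:\ \overline{\Lambda}_{k}(\lambda^*,h_s)\ge 0,\ A(h_s)<\overline{\Lambda}_{k}(\lambda^*,h_s)\}$ is empty. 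I denote by $h^\dagger$ the direct-link break-even level at which $\log_2(1+\overline{\rho}h_s^2)=\lambda^*$, i.e.\ $A(h^\dagger)=0$ and $\overline{\rho}(h^\dagger)^2=2^{\lambda^*}-1$; this is the level at which $\overline{\Lambda}_{k}$ is evaluated in the statement.

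The ``otherwise'' ($\Leftarrow$) direction is immediate: if $\overline{\Lambda}_{k}(\lambda^*,h^\dagger)>0$, then at $h_s=h^\dagger$ we have $A(h^\dagger)=0<\overline{\Lambda}_{k}(\lambda^*,h^\dagger)$ together with $\overline{\Lambda}_{k}\ge 0$, so $h^\dagger\in\mathcal{R}_k$ and {\tt assist RIS} is optimal for that channel realization.

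For the main ($\Rightarrow$) direction I would assume $\overline{\Lambda}_{k}(\lambda^*,h^\dagger)\le 0$ and establish $\mathcal{R}_k=\emptyset$ from two monotonicity facts. First I would rewrite \eqref{equ:lambda_def}, using $\tau_d-\tau_{M,1}=(\tau_d-\tau_{M,2})+(\tau_p+\tau_C)$, in the equivalent form $\Lambda_{k}(\lambda^*,h_s)=(\tau_d-\tau_{M,2})\,\mathbb{E}[(R_{w(n),r}-\lambda^*)^+]-\lambda^*(\tau_p+\tau_C)$, from which $\Lambda_{k}$ (and, through approximations (a)--(b), its surrogate $\overline{\Lambda}_{k}$) is seen to be \emph{strictly increasing} in $h_s$, since $R_{w(n),r}=\log_2(1+\overline{\rho}(h_s+X)^2)$ with $X=|\mathbf{f}_{k}|^T|\mathbf{g}_{k}|\ge 0$ increases in $h_s$. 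Hence for $h_s\le h^\dagger$ we get $\overline{\Lambda}_{k}(\lambda^*,h_s)\le 0$, which rules out $\mathcal{R}_k$ there (at the boundary $\overline{\Lambda}_k=0$ one has $A=0$, so the strict inequality $A<\overline{\Lambda}_k$ fails). For $h_s\ge h^\dagger$ I would show the gap $\psi(h_s)\triangleq A(h_s)-\Lambda_{k}(\lambda^*,h_s)$ is nondecreasing; differentiating gives $\psi'(h_s)=(\tau_d-\tau_{M,1})R_{w(n),d}'(h_s)-(\tau_d-\tau_{M,2})\,\mathbb{E}[R_{w(n),r}'(h_s)\,\mathbb{I}(R_{w(n),r}>\lambda^*)]$.

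Here is where $\lambda^*\ge 1$ is indispensable, and this derivative bound is the main obstacle. The condition $\lambda^*\ge 1$ forces $\overline{\rho}(h^\dagger)^2=2^{\lambda^*}-1\ge 1$, placing $h^\dagger$ at or beyond the unique maximizer $u^\star=1/\sqrt{\overline{\rho}}$ of $u\mapsto\frac{d}{du}\log_2(1+\overline{\rho}u^2)$. On the event $\{R_{w(n),r}>\lambda^*\}$ one has $h_s+X\ge h_s\ge h^\dagger\ge u^\star$, so this derivative is in its decreasing regime and $R_{w(n),r}'(h_s)\le R_{w(n),d}'(h_s)$; therefore $\mathbb{E}[R_{w(n),r}'\,\mathbb{I}(R_{w(n),r}>\lambda^*)]\le R_{w(n),d}'(h_s)\,\mathbb{P}(R_{w(n),r}>\lambda^*)\le R_{w(n),d}'(h_s)$, whence $\psi'(h_s)\ge(\tau_p+\tau_C)R_{w(n),d}'(h_s)\ge 0$. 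Since $\psi(h^\dagger)=-\Lambda_{k}(\lambda^*,h^\dagger)\ge 0$, monotonicity gives $A\ge\Lambda_{k}$ on $[h^\dagger,\infty)$, excluding $\mathcal{R}_k$ there as well; combining the two ranges yields $\mathcal{R}_k=\emptyset$. The remaining care is routine: justifying differentiation under the expectation (the set $\{R_{w(n),r}=\lambda^*\}$ is null), and transferring the monotonicity and sign statements from the exact $\Lambda_{k}$ to the closed-form $\overline{\Lambda}_{k}$, which share the same strictly monotone dependence on $\Omega$.
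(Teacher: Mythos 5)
Your proof is correct, and it is substantially more complete than the paper's own justification, which consists of the single sentence ``It is proved by monotonicity of $\overline{\Lambda}_{k}(\lambda^*,h_s)$ over $h_s$.'' Both arguments share the first ingredient: the characterization of the {\tt assist RIS} region as $\{h_s:\overline{\Lambda}_k(\lambda^*,h_s)\ge 0,\ A(h_s)<\overline{\Lambda}_k(\lambda^*,h_s)\}$ together with monotonicity of $\overline{\Lambda}_k$ in $h_s$, which disposes of the range $h_s\le h^\dagger$ and gives the ``otherwise'' direction. Where you genuinely go beyond the paper is the range $h_s\ge h^\dagger$: monotonicity of $\overline{\Lambda}_k$ alone does not exclude $A(h_s)<\overline{\Lambda}_k(\lambda^*,h_s)$ with both quantities positive, and your derivative comparison --- rewriting $\Lambda_k=(\tau_d-\tau_{M,2})\,\mathbb{E}[(R_{w(n),r}-\lambda^*)^+]-\lambda^*(\tau_p+\tau_C)$, using that $u\mapsto \frac{2\overline{\rho}u}{1+\overline{\rho}u^2}$ is decreasing past $u^\star=1/\sqrt{\overline{\rho}}$, and invoking the standing assumption $\lambda^*\ge 1$ to place $h^\dagger$ beyond $u^\star$ so that $\psi'\ge(\tau_p+\tau_C)R_{w(n),d}'\ge 0$ --- is exactly the step needed to close this case; it also explains \emph{why} the paper imposes $\lambda^*\ge 1$, which its one-line proof never uses. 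Two small blemishes: the parenthetical ``at the boundary $\overline{\Lambda}_k=0$ one has $A=0$'' is only literally true when $\overline{\Lambda}_k(\lambda^*,h^\dagger)=0$ exactly (when $\overline{\Lambda}_k(\lambda^*,h^\dagger)<0$ the zero of $\overline{\Lambda}_k$ sits strictly above $h^\dagger$, where $A>0$, so the conclusion still holds and is anyway covered by your second branch); and, like the paper, you pass between the exact $\Lambda_k$ and the CLT/Taylor surrogate $\overline{\Lambda}_k$ informally, which you at least acknowledge explicitly.
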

\begin{proof}
 It is proved by monotonicity of $\overline{\Lambda}_{k}(\lambda^*,h_s)$ over $h_s$.
\end{proof}	
\vspace{-0.1cm}



Then, we define source-destination pairs set as
${\mathscr K}^*\overset{}{=}\big\{k\in\{1,...,K\}:\overline{\Lambda}_{k}(\lambda^*,\sqrt{2^{\lambda^*}-1})\!>\! 0\big\}$.
Using $\overline{\Lambda}_{k}(\lambda^*,h_s)$ and $\lambda^*$, we can calculate the set ${\mathscr K}^*$ offline. 

Moreover, for all pair $k\in {\mathscr K}^*$, we define $\zeta_{k}$ as the solution to $\overline{\Lambda}_{k}(\lambda^*,h_s)=0$, and $\eta_{k}$ as the solution to $\big(\log_2(1+\overline{\rho}h_{s}^2)- \lambda^*\big) (\tau_d-\tau_{M,1})=\overline{\Lambda}_{k}(\lambda^*,h_s)$. And they exist uniquely.



Furthermore, we are now designing a network implementation algorithm based on the proposed MAC strategies to enable distributed network operation. The algorithm follows the channel access procedure outlined in Section~\ref{sub:protocol}. The algorithm operates for each source-destination pair, utilizing ${\mathscr K}^*$ and offline-obtained parameters $\zeta_{k},\eta_{k},k\in{\mathscr K}^*$. With its pure-threshold structure, the proposed algorithm achieves significantly lower online complexity at $O(1)$, in contrast to the strategy outlined in Theorem~\ref{th:optimal_rule1}, which ensures efficient decision. 
\begin{algorithm}[b]
	\caption{Pure-threshold algorithm by pair $k$ }\label{Algorithm3}
	\SetKwInOut{Input}{Input}\SetKwInOut{Output}{Output}
	\Input  { ${\lambda^*},{\mathscr K}^*$}
	\Repeat{all data transmissions finish}{
		\vspace{-1mm}
		After a successful channel contention, ${\text{S}}_k$ wins the channel and ${\text{D}}_k$ obtains $h_k$.
		\uIf{$k\in {\mathscr K}^*$}{
			\uIf{$|h_k|\geq  \eta_k$}{
				${\text{S}}_k$ transmits to ${\text{D}}_k$ in direct link with rate $R_{k,d}$ and duration $\tau_d-\tau_{M,1}$.}
			\uElseIf{ $|h_k|\le  \zeta_k$}{
				\vspace{-1mm}
				${\text{D}}_k$ gives up and all sources re-contend.}
			\Else{ \vspace{-1mm}
				${\text{D}}_k$ estimates RIS channels, obtains $\mathbf{f}_{k}\astrosun$ $\mathbf{g}_{k}$ and calculates $\mathbf{\Phi}_{k}^*$ and $R_{k,r}$\\			
				\uIf{$R_{k,r} \ge {\lambda^*}$}{
					\vspace{-1mm}
					${\text{S}}_k$ transmits to ${\text{D}}_k$ under aids of the RIS with rate $R_{k,r}$, RIS beamforming matrix $\mathbf{\Phi}_{k}^*$ and duration $\tau_d-\tau_{M,2}$.}
				\Else{\vspace{-2mm}
					it gives up and all sources re-contend.}
			}
		}
		\Else{
			\uIf {$R_{k,d}\ge \lambda^*$}{
				${\text{S}}_k$ transmits in direct link with rate $R_{k,d}$ and duration $\tau_d-\tau_{M,1}$.}
			\Else{
				it gives up and all sources re-contend.}
		}
	}
\end{algorithm}
\vspace{-0.0cm}

\vspace{0cm}
\section{Numerical Results}\label{s:simu}
In this section, we numerically evaluate the performance. The network consists of $K=8$ with a central frequency of $f_c=2$\,GHz, antenna gains $G_t=G_r=0$\,dBi, $\beta_0=-30$\,dB, $P_t=30$\,dBm, and $N_0=-80$\,dBm. The channel contention parameters  are $p_0=0.3$, $\delta=25\mu$s, $\tau_{R}=\tau_{C}=50\mu$s, and $\tau_p=500\mu$s. The path loss exponents are $\alpha_{1}=3$ and $\alpha_{2}=2.5$. Locations in 2D plane of $\text{S}_k$ and $\text{D}_k$ are represented by $\textbf{S}=[(0,0),(0,10),...,(0,(K-1)10)]$ and $\textbf{D}=[(150,0),(150,10),...,(150,(K-1)10)]$, respectively.
The RIS is located at $(75,100)$ with $M=32$. In Fig.~\ref{fig:comparison1a} and Fig.~\ref{fig:comparison2a}, the results, marked as ``{Analytical}", and the approximated results, marked as ``{Approximation}", are calculated based on the proposed iterative algorithm with Monte-Carlo calculated functions in (\ref{equ:lambda_def}) and the closed-form expression in (\ref{equ:ergodic_cap}), respectively. The simulation results, marked as ``{Simulation}", are calculated from the sample average of transmitted traffic and time spent in the trails of data transmissions using Algorithm~1. In all figures, we observe that the approximation, analytical, and simulation results are well-matched, which validates the accuracy of the analysis for our proposed strategy.
\begin{figure}[t!]
	\begin{center}
		\includegraphics[scale=.41]{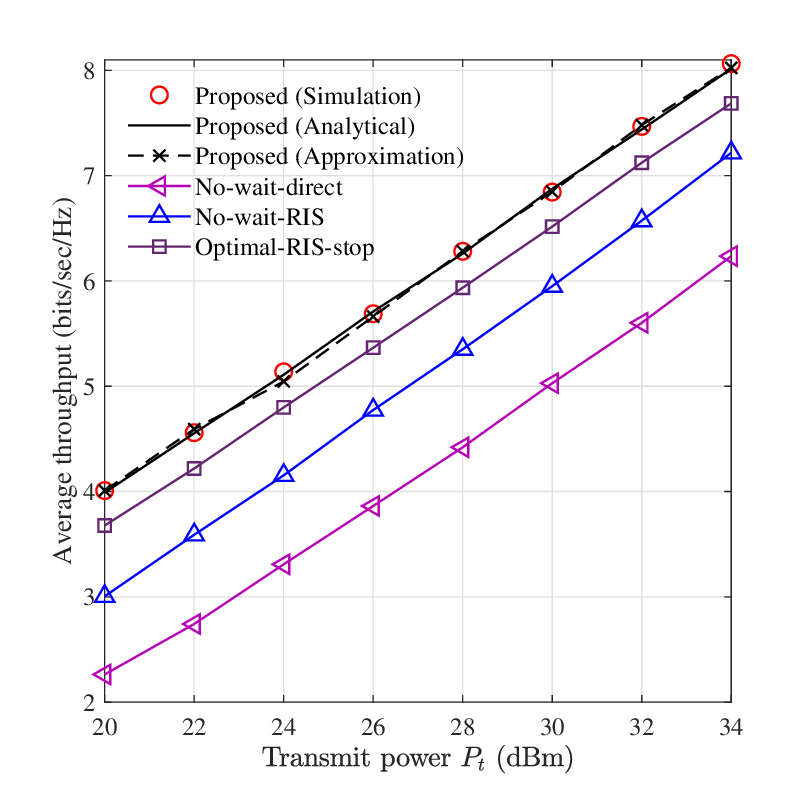}\vspace{-4mm}
		\caption{Average throughput vs $P_t$ for $K=8$ and $\tau_d=15$\,ms. }\label{fig:comparison1a}
	\end{center}
  \vspace{-0.7cm}
\end{figure}

We compare our proposed strategy with three benchmark strategies: (i) ``No-wait-direct": a winner pair has the direct link CSI and always transmits via the direct link;  (ii) ``No-wait-RIS": a winner pair has both direct and RIS link CSI, and always transmits via RIS assisted strategy, and (iii) ``Optimal-RIS-stop": a winner pair has both direct and RIS link CSI, and an optimal stopping strategy is used as in \cite{Wei2020acm}. Fig.~\ref{fig:comparison1a} shows the average throughput vs transmit power $P_t$ for $\tau_d=15$\,ms. Our proposed strategy outperforms all other strategies achieving over $8$\% average throughput gain over the best alternative strategy (optimal-RIS-stop) and over $66$\% and $27$\% throughput advantage over the no-wait-direct and no-wait-RIS strategies, respectively. 
Fig.~\ref{fig:comparison2a} plots the average throughput versus channel coherence time for the proposed and alternative strategies. The proposed strategy consistently outperforms all other strategies, with throughput advantages of more than 10\%, 31\%, and 13\% over the optimal-RIS-stop, no-wait-direct, and no-wait-RIS strategies, respectively, at $\tau_d=5$\,ms. 
The proposed strategy exploits varying channel conditions of different user pairs and the RIS, resulting in better channel access decisions and higher average throughput gain compared to all no-wait strategies.

\vspace{-1mm}
\section{Conclusion}\label{s:con}\vspace{-1mm}
This research presents an optimal RIS-assisted channel access strategy for a distributed cooperative network involving multiple source-destination pairs and a single RIS. We use CSMA/CA strategy designed through sequential planned decision theory in a distributed manner to maximize average system throughput, rigorously proving its optimality. We achieve low online-complexity at $O(1)$ and develop a pure-threshold algorithm for distributed strategy implementation. Our proposed strategy surpasses existing ones in terms of system throughput. These findings have wide applicability in developing innovative RIS-integrated designs for opportunistic access across various networks like cellular networks and IoT networks.
Future work will extend this strategy to distributed cooperative networks with multiple source-destination pairs and multiple RISs, allowing us to select a set of RISs as an additional design parameter. 
\begin{figure}[t!]
	\begin{center}
		\includegraphics[scale=.41]{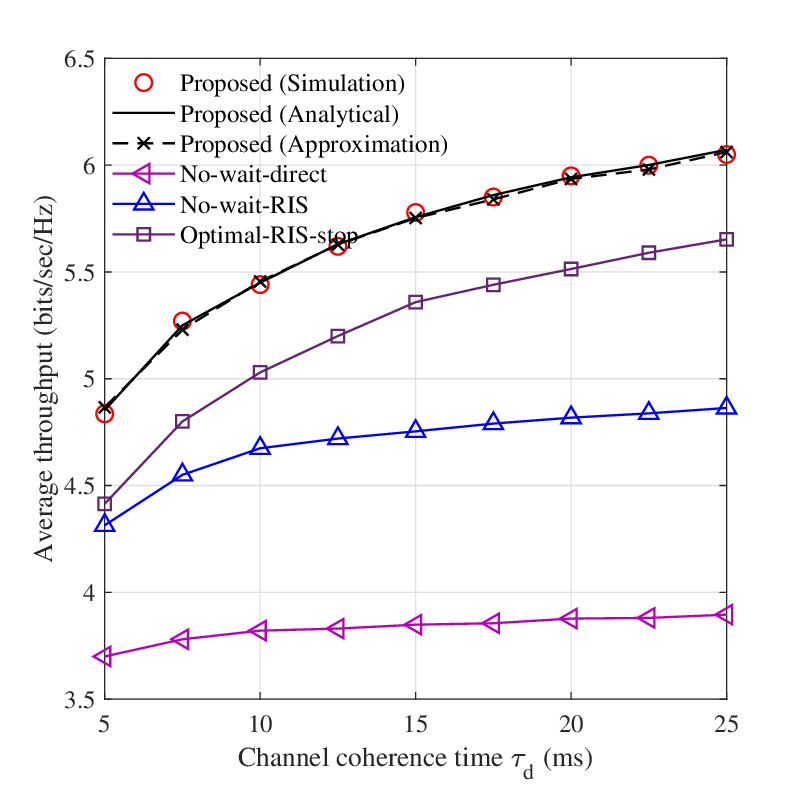}\vspace{-4mm}
		\caption{Average throughput vs $\tau_d$ for $K=8$ and $P_t=26$\,dBm.}\label{fig:comparison2a}
	\end{center}
 \vspace{-0.7cm}
\end{figure}

\vspace{-0.1cm}
\bibliographystyle{IEEEtran}
\bibliography{reference,IEEEabrv}

\begin{thebibliography}{10}
\providecommand{\url}[1]{#1}
\csname url@samestyle\endcsname
\providecommand{\newblock}{\relax}
\providecommand{\bibinfo}[2]{#2}
\providecommand{\BIBentrySTDinterwordspacing}{\spaceskip=0pt\relax}
\providecommand{\BIBentryALTinterwordstretchfactor}{4}
\providecommand{\BIBentryALTinterwordspacing}{\spaceskip=\fontdimen2\font plus
\BIBentryALTinterwordstretchfactor\fontdimen3\font minus
  \fontdimen4\font\relax}
\providecommand{\BIBforeignlanguage}[2]{{%
\expandafter\ifx\csname l@#1\endcsname\relax
\typeout{** WARNING: IEEEtran.bst: No hyphenation pattern has been}%
\typeout{** loaded for the language `#1'. Using the pattern for}%
\typeout{** the default language instead.}%
\else
\language=\csname l@#1\endcsname
\fi
#2}}
\providecommand{\BIBdecl}{\relax}
\BIBdecl

\bibitem{Zdogan2020}
E.~Bjornson, O.~Ozdogan, and E.~G. Larsson, ``Intelligent reflecting surface
  versus decode-and-forward: How large surfaces are needed to beat relaying,''
  \emph{{IEEE} Wireless Commun. Lett.}, vol.~9, no.~2, pp. 244--248, 2020.

\bibitem{Shuowen2020}
S.~Zhang and R.~Zhang, ``Capacity characterization for intelligent reflecting
  surface aided {MIMO} communication,'' \emph{{IEEE} J. Select. Areas Commun.},
  vol.~38, no.~8, pp. 1823--1838, 2020.

\bibitem{Atapattu2020tcom}
S.~Atapattu, R.~Fan, P.~Dharmawansa, G.~Wang, J.~Evans, and T.~A. Tsiftsis,
  ``Reconfigurable intelligent surface assisted two–way communications:
  Performance analysis and optimization,'' \emph{{IEEE} Trans. Commun.},
  vol.~68, no.~10, pp. 6552--6567, 2020.

\bibitem{Fang2022tcom}
Y.~Fang, S.~Atapattu, H.~Inaltekin, and J.~Evans, ``Optimum reconfigurable
  intelligent surface selection for wireless networks,'' \emph{{IEEE} Trans.
  Commun.}, vol.~70, no.~9, pp. 6241--6258, 2022.

\bibitem{Dharmawansa2021wcoml}
P.~Dharmawansa, S.~Atapattu, and M.~D. Renzo, ``Performance analysis of a
  two–tile reconfigurable intelligent surface assisted $2\times 2$ {MIMO}
  system,'' \emph{{IEEE} Wireless Commun. Lett.}, vol.~10, no.~3, pp. 493--497,
  2021.

\bibitem{Boulogeorgos2020}
A.~A. Boulogeorgos and A.~Alexiou, ``Performance analysis of reconfigurable
  intelligent surface-assisted wireless systems and comparison with relaying,''
  \emph{IEEE Access}, vol.~8, pp. 94\,463--94\,483, 2020.

\bibitem{Wu2019}
Q.~Wu and R.~Zhang, ``Intelligent reflecting surface enhanced wireless network
  via joint active and passive beamforming,'' \emph{{IEEE} Wireless Commun.
  Mag.}, vol.~18, no.~11, pp. 5394--5409, 2019.

\bibitem{Gao2021}
Y.~Gao, C.~Yong, Z.~Xiong, J.~Zhao, Y.~Xiao, and D.~Niyato, ``Reflection
  resource management for intelligent reflecting surface aided wireless
  networks,'' \emph{{IEEE} Trans. Commun.}, vol.~69, no.~10, pp. 6971--6986,
  2021.

\bibitem{HZhang2020}
H.~Zhang, B.~Di, L.~Song, and Z.~Han, ``Reconfigurable intelligent surfaces
  assisted communications with limited phase shifts: How many phase shifts are
  enough?'' \emph{{IEEE} Trans. Veh. Technol.}, vol.~69, no.~4, pp. 4498--4502,
  2020.

\bibitem{Wijewardena2021coml}
M.~Wijewardena, T.~Samarasinghe, K.~T. Hemachandra, S.~Atapattu, and J.~S.
  Evans, ``Physical layer security for intelligent reflecting surface assisted
  two–way communications,'' \emph{{IEEE} Commun. Lett.}, vol.~25, no.~7, pp.
  2156--2160, 2021.

\bibitem{Mohammad2020}
M.~Moltafet, M.~Leinonen, and M.~Codreanu, ``Worst case age of information in
  wireless sensor networks: A multi-access channel,'' \emph{{IEEE} Wireless
  Commun. Lett.}, vol.~9, no.~3, pp. 321--325, Mar. 2020.

\bibitem{Huang2021}
X.~Huang, A.~Liu, H.~Zhou, K.~Yu, W.~Wang, and X.~Shen, ``{FMAC}: A
  self-adaptive {MAC} protocol for flocking of flying {Ad Hoc} network,''
  \emph{IEEE Internet of Things}, vol.~8, no.~1, pp. 321--325, Mar. 2021.

\bibitem{Xuelin2021}
X.~Cao, B.~Yang, H.~Zhang, C.~Huang, C.~Yuen, and Z.~Han,
  ``Reconfigurable-intelligent-surface-assisted {MAC} for wireless networks:
  Protocol design, analysis, and optimization,'' \emph{IEEE Internet of
  Things}, vol.~8, no.~18, pp. 14\,171--14\,186, 2021.

\bibitem{Xie2023wcl}
J.~Xie, Z.~Zhang, S.~Atapattu, Y.~Ye, and H.~Zhang, ``Optimal secure channel
  access in distributed cooperative networks with untrusted relay,''
  \emph{{IEEE} Wireless Commun. Lett.}, vol.~12, no.~6, pp. 1091--1095, 2023.

\bibitem{Qingqing_survey}
Q.~Wu, S.~Zhang, B.~Zheng, C.~You, and R.~Zhang, ``Intelligent reflecting
  surface-aided wireless communications: A tutorial,'' \emph{{IEEE} Trans.
  Commun.}, vol.~69, no.~5, pp. 3313--3351, 2021.

\bibitem{Wei2020acm}
Z.~Wei, J.~Su, B.~Zhao, and X.~Lu, ``Distributed opportunistic scheduling in
  cooperative networks with {RF} energy harvesting,'' \emph{{IEEE/ACM} Trans.
  Networking}, pp. 1--14, 2020.

\bibitem{fergusonoptimal}
\BIBentryALTinterwordspacing
T.~Ferguson. Optimal stopping and applications. Mathematics Department, UCLA.
  [Online]. Available:
  \url{https://www.math.ucla.edu/~tom/Stopping/Contents.html}
\BIBentrySTDinterwordspacing

\bibitem{Schmitz_N_book}
N.~Schmitz, \emph{Optimal sequentially planned decision procedures}, ser.
  Lecture notes in statistics.\hskip 1em plus 0.5em minus 0.4em\relax New York:
  Springer-Verlag, 1993, no.~79.

\bibitem{Berts1999}
D.~Bertsekas, \emph{Nonlinear Programming}.\hskip 1em plus 0.5em minus
  0.4em\relax Athena Scientific, 1999.

\end{thebibliography}

\end{document}